\DeclarePairedDelimiter{\ceil}{\lceil}{\rceil}
\newcommand{\nth}{\textsuperscript{th}}
\newcommand{\numth}[1]{$#1$\textsuperscript{th}}
\newtheorem{definition}{Definition}
\newtheorem{problem}{Problem}
\newtheorem{lemma}{Lemma}
\begin{document}

\title{A quantum compiler for qudits of prime dimension greater than 3}
\author{Luke E. Heyfron}
\author{Earl Campbell}
\affiliation{Department of Physics and Astronomy, University of Sheffield, Sheffield, UK}

\begin{abstract}
Prevailing proposals for the first generation of quantum computers make use of 2-level systems, or \emph{qubits}, as the fundamental unit of quantum information.
However, recent innovations in quantum error correction and magic state distillation protocols demonstrate that there are advantages of using $d$-level quantum systems, known as \emph{qudits}, over the qubit analogues.
When designing a quantum architecture, it is crucial to consider protocols for compilation, the optimal conversion of high-level instructions used by programmers into low-level instructions interpreted by the machine. In this work, we present a general purpose automated compiler for multiqudit exact synthesis based on previous work on qubits that uses an algebraic representation of quantum circuits called phase polynomials.   We assume Clifford gates are low-cost and aim to minimise the number of $M$ gates in a Clifford+$M$ circuit, where $M$ is the qudit analog for the qubit $T$ or $\pi/8$ phase gate. A surprising result that showcases our compiler's capabilities is that we found a unitary implementation of the CCZ or Toffoli gate  that uses 4 $M$ gates, which compares to 7 $T$ gates for the qubit analogue.
\end{abstract}

\maketitle

\section{Introduction}

Despite its ubiquity in computing, the choice to use binary instead of ternary  or some other numeral system is almost arbitrary. From a purely information theoretic perspective, there is no reason to prefer bits over $d$-value anologues, known as \emph{dits}. In fact, successful experiments into 3-value logic were realised in the form of the \emph{Setun}, a ternary computer built in 1958 by Sergei Sobolev and Nikolay Brustentsov at Moscow State University~\cite{Brusentsov_2011}. The near universal adoption of binary can be explained from an engineering perspective in that it is much simpler to manufacture binary components.  However, since as early as the 1940's with the biquinary Collossus computer, it has been widely understood that there are intrinsic efficiency benefits of using higher dimensional logic components in that fewer are required.
 
In the standard paradigm, there are three components required for a fault tolerant quantum computing architecture: quantum error correction (QEC) codes; magic state distillation (MSD) protocols; and finally, quantum compilers. For qudits, there has been progress showing that both qudit QEC~\cite{Duclos-Cianci_2013, Anwar_2014, Hutter_2015,Watson_2015_a,Watson_2015_b} and qudit MSD~\cite{anwar2012qutrit,Campbell_2012,Campbell_2014,haah2017magic,krishna2018towards} offer a resource advantage in shifting from qubits to qudits.  However, surprisingly little work has been done on qudit compiling, except for the special case of qutrits where $d=3$~\cite{khan2005synthesis,bocharov2017factoring}.    Therefore,  compiling is the crucial missing piece in understanding quantum computing with qudit logic beyond $d=3$.

A standard metric for quantum compilers to minimize is the number of expensive gates that require magic state distillation.
In the qubit case, the $T$ gate is typically the designated magic gate in the low-level instruction set and much progress has been made on gate synthesis in this context.
For single qubits, the Matsumoto-Amano normal form~\cite{Matsumoto_2008, Giles_2013, Kliuchnikov_2013} leads to decompositions of single qubit unitaries as sequences of gates from the Clifford + $T$ gate set that is optimal with regards to $T$ count for a given approximation error.
So for single qubits, the problem is essentially ``solved''.
For multi-qubit operators, methods for $T$-optimal exact compilation have been developed but suffer exponential runtime~\cite{Gosset_2014}.
More recently, efficient optimizers have been developed that successfully reduce $T$ count, some of which are based on a correspondence between unitaries on a restricted gate set and so called phase-polynomials~\cite{Amy_2014,Amy_2016,Campbell_2017,Heyfron_2019}, and others that are based on local rewrite rules~\cite{Nam_2017}.
For qudits, there has been some work on single qutrit (three level systems) synthesis~\cite{Glaudell_2018} that can be considered a qutrit generalisation of the Matsumoto-Amano normal form.

In this work, we borrow ideas from the phase-polynomial style $T$ count optimization protocols and apply these insights to qudits.
We provide a general purpose compiler for exact synthesis of multiqudit unitaries generated by $M$, $P_l$ and $SUM$ gates where the $M$ gate is the canonical ``expensive'' magic gate (i.e. the qudit analogue of the $T$ gate).
We present an example of a $M$ count reduction only possible for odd prime $d>3$.
This is the CCZ gate, which is known to have optimal $T$ count of 7 when synthesised unitarily using qubit based quantum computers, whereas our decomposition has $M$ count of 4.
Until now, this reduced cost has only been achieved in the qubit setting using non-unitary gadgets that exploit ancillas~\cite{Jones_2013}.

\begin{figure*}[t]
\centering \includegraphics[width=\linewidth]{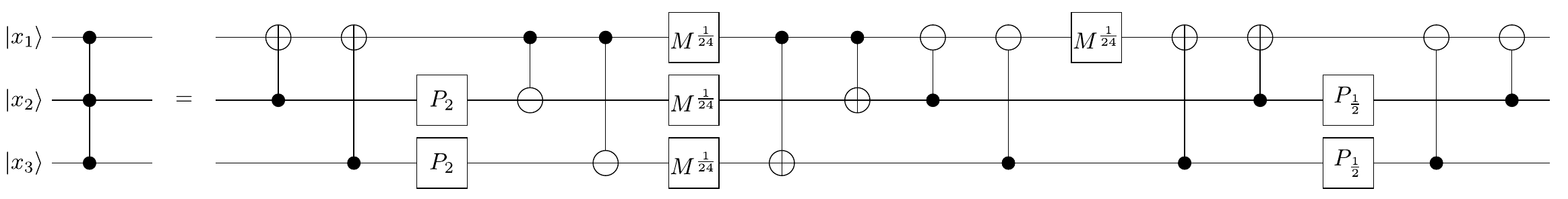}
\caption{A unitary implementation of the $CCZ$ gate that uses 4 non-Clifford $M^k$ gates. Note that the $\frac{1}{24}$ in the exponent of the $M^k$ gates is the multiplicative inverse of $24$ in the field $\mathbb{Z}_d$.  Gates used here are defined in Sec.~\ref{s_prelim}.}
\label{fig_ccz}
\end{figure*}

\section{Preliminaries}
\label{s_prelim}

Let $d>3$ be a prime integer. We define a Hilbert space on $n$ qudits spanned by the computational basis vectors $\{\ket{\mathbf{x}} \ \mid \ \forall \mathbf{x} \in \mathbb{Z}_d^n\}$.
We define the single qudit Pauli operators $X := \sum_{x \in \mathbb{Z}_d}\ket{x+1}\bra{x}$, where addition is performed modulo $d$; $Z := \sum_{x \in \mathbb{Z}_d}\omega^x\ket{x}\bra{x}$, where $\omega := \exp^{i\frac{2\pi}{d}}$ is a primitive $d$\nth root of unity.
The set of all $n$ qudit unitaries generated by $X$ and $Z$ form the Pauli group, $\mathcal{P}$.

The Clifford group, $\mathcal{C}$, is the normalizer of the Pauli group and is generated by: 
\begin{align}
H & := \frac{1}{\sqrt{d}}\sum_{x,y \in \mathbb{Z}_d} \omega^{xy}\ket{y}\bra{x} \\
SUM & :=  \sum_{c,t \in \mathbb{Z}_d}\ket{c,t+c}\bra{c,t}  \\
S & := \sum_{x \in \mathbb{Z}_d} \omega^{x^2}\ket{x}\bra{x}
\end{align} 
We refer to $H$ as the Hadamard; 
the Hadamard gate, $SUM$ is the two-qudit SUM gate; and $S$ is the \emph{phase gate}. Note that the $Z$ and $S$ gates are both diagonal and correspond to linear and quadratic terms, respectively, appearing in the exponent of the phase.

We further define the Clifford unitaries 
\begin{equation}
\label{Pl_gate}
P_l :=  \sum_{x \in \mathbb{Z}_d}\ket{l x}\bra{x} 
\end{equation} 
for all integer $l \neq 0 $, which we call \emph{product operators} as they perform field multiplication between the input basis states and a non-zero field element, $l$. It can be shown that all product operators are in the Clifford group.

As in previous works~\cite{Campbell_2012,Howard_2012,Cui_2017}, we define the canonical non-Clifford gate to be 
\begin{equation}
   M := \sum_{x \in \mathbb{Z}_d}\omega^{x^3}\ket{\mathbf{x}}\bra{\mathbf{x}} ,
\end{equation}
which lies in the third level of the Clifford hierarchy and in standard fault tolerant architectures are much more costly than Clifford gates due to the need for MSD.

\section{The Compiling Problem}
A compiler converts high-level instructions into low-levels ones.
In this paper, we concern ourselves with high-level instructions that take the form of $n$-qudit unitaries which can be exactly synthesised by a discrete gate set, $\mathcal{G}$.
By low-level instructions, we specifically refer to quantum circuits, which are represented as \emph{netlists}, or time-ordered lists of gates taken from $\mathcal{G}$ where the qudits to which they apply (as well as any other gate parameters) are specified for each gate.
The unitary that a particular quantum circuit implements is simply the right-to-left matrix product of each gate in the netlist extracted in time-order.

\begin{problem}
\label{prob_comp}
(Compiling Problem). Given a unitary $U \in \mathcal{G}$, find a quantum circuit that implements $U$ with the lowest cost.
\end{problem}

Note that the compiling problem is ill-defined and depends on the definition of cost.
The most accurate metric of quantum circuit cost is the full space-time volume, which is the number of machine level operations multiplied by the number of physical qubits.
The calculation required to determine the full space-time volume is lengthy and is highly sensitive to the choice of architecture~\cite{fowler2012towards,o2017quantum,babbush2018encoding}.
The \emph{$M$ count}, or the number of $M^k$ gates in a quantum circuit, is an alternative cost metric that gives a good approximation to the full cost and can be easily read off compiler-level quantum circuits.
Using the $M$ count in problem~\ref{prob_comp}, we obtain a well defined compiling problem.

\begin{problem}
($M$-Minimization). Given a unitary $U \in \mathcal{G}$, find a quantum circuit that implements $U$ with the fewest $M^k$ gates.
\end{problem}

We choose our gate set to be $\mathcal{G} = \{ Z, S, M^k, P_l, SUM \}$ for all available choices of $k$ and $l$.
While we would ideally work with a universal gate set such as Clifford + $M$,
the compiling problem is known to be intractable in the universal case so we focus on this simpler sub-problem.
For the selection of gates in $\mathcal{G}$, we have taken inspiration from previous work~\cite{Amy_2014, Amy_2016, Campbell_2017, Heyfron_2019},
where it was demonstrated that such a restriction leads to an algebraic reformulation of the compiler problem that is more amenable to computational methods, including efficient heuristics.

\section{Phase Polynomial Formalism}
\label{s_phasepoly}

The formalism described in this section allows us to reframe the $M$-minimization problem as a computationally-friendly problem on integer matrices.
It applies strictly to unitaries \\ $U \in \langle Z, S, M^k, P_l, SUM \rangle$ and is a straightforward generalisation of previous work~\cite{Amy_2014, Amy_2016, Campbell_2017, Heyfron_2019}.
We proceed with a lemma that establishes a correspondence between unitaries generated by $\mathcal{G}$ and cubic polynomials that we call \emph{phase polynomials}.

\begin{lemma}
\label{thm_pf2cub}
Any $n$ qudit unitary $U_f \in \langle \mathcal{G} \rangle$ can be expressed as follows:
\begin{equation}
\label{eq_pf}
U_f = \sum_{\mathbf{x} \in \mathbb{Z}_d^n} \omega^{f(\mathbf{x})}\ket{E\mathbf{x}}\bra{\mathbf{x}},
\end{equation}
where $E$ is an invertible matrix implementable with $SUM$ gates, and $f: \mathbb{Z}_d^n \mapsto \mathbb{Z}_d$ is a polynomial of order less than or equal to 3.
\end{lemma}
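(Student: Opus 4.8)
The plan is to proceed by induction on the number of gates $N$ in a circuit $U_f = g_N \cdots g_2 g_1$ that generates the unitary, since by definition every $U_f \in \langle \mathcal{G} \rangle$ admits such a decomposition with each $g_j \in \mathcal{G}$. The base case $N = 0$ is the identity, which already has the form of Eq.~(\ref{eq_pf}) with $E = I$ (the empty product of linear gates) and $f \equiv 0$, a polynomial of degree $0 \le 3$. For the inductive step I would assume that $V := g_N \cdots g_1$ has the claimed form with invertible linear $E$ and $\deg f \le 3$, and then show that left-multiplying by any single generator $g_{N+1} \in \mathcal{G}$ preserves this form.

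The observation that makes the induction go through is that every generator in $\mathcal{G}$ is a generalized permutation (monomial) matrix: it sends each computational basis vector to a single basis vector times a phase, and these split into two clean types. The gates $Z$, $S$ and $M^k$ act diagonally, $D = \sum_{\mathbf{y}} \omega^{h(\mathbf{y})}\ket{\mathbf{y}}\bra{\mathbf{y}}$, where $h$ is the monomial $y_i$, $y_i^2$ or $k y_i^3$ on the relevant qudit $i$, so $\deg h \le 3$. The gates $SUM$ and $P_l$ act as linear permutations, $\ket{\mathbf{y}} \mapsto \ket{A\mathbf{y}}$, with $A$ invertible over $\mathbb{Z}_d$ (an elementary transvection for $SUM$, a coordinate rescaling for $P_l$) and no phase. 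Crucially, the gate set contains no Hadamard, so no generator ever maps a basis state to a genuine superposition, which is precisely what keeps $U_f$ in monomial form.

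I would then treat the two types separately. For a diagonal $g_{N+1} = D$ a direct computation gives
\begin{equation}
D V \ket{\mathbf{x}} = \omega^{f(\mathbf{x})} D \ket{E\mathbf{x}} = \omega^{f(\mathbf{x}) + h(E\mathbf{x})} \ket{E\mathbf{x}},
\end{equation}
so $E$ is unchanged and the phase polynomial updates to $f + h\circ E$. For a linear $g_{N+1}$ with matrix $A$,
\begin{equation}
g_{N+1} V \ket{\mathbf{x}} = \omega^{f(\mathbf{x})} \ket{A E \mathbf{x}},
\end{equation}
so $f$ is unchanged and $E$ updates to the still-invertible $AE$. In both cases the resulting operator again has the form of Eq.~(\ref{eq_pf}), which closes the induction on the structural shape; it remains only to verify the degree bound.

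The one step needing genuine care — and which I regard as the heart of the lemma — is showing $\deg(f + h\circ E) \le 3$. Since $E$ is linear, each entry $(E\mathbf{x})_i = \sum_j E_{ij} x_j$ is a degree-one form in $\mathbf{x}$, so substituting into the monomial $h$ of degree $\le 3$ yields a polynomial $h \circ E$ of degree $\le 3$, and adding it to $f$ keeps the total degree at most $3$. This is exactly where the restriction to $\mathcal{G}$ pays off: it is only because the accumulated basis permutation $E$ is \emph{linear} that a cubic phase can never be promoted to higher degree, whereas a nonlinear reversible map would spoil the bound. Finally, I would track $E$ as a product of the elementary matrices contributed by the $SUM$ and $P_l$ gates, which shows it is an invertible, Clifford-implementable linear map as required; note that the product operators are what supply any nontrivial determinant, so strictly $E$ is realized by $SUM$ together with the $P_l$ gates.
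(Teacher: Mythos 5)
Your proof is correct and takes essentially the same approach as the paper's: both classify the generators into diagonal phase gates ($Z$, $S$, $M^k$) and invertible linear basis-permuting gates ($P_l$, $SUM$), show that the form of Eq.~\eqref{eq_pf} is preserved under composition, and obtain the degree bound from the fact that composing a monomial of degree at most $3$ with the linear map $E$ cannot raise the degree. Your explicit induction on the gate count (and your remark that $E$ generally needs $P_l$ as well as $SUM$ gates) is simply a more formal packaging of the paper's argument that each generator has the required form and that this form is closed under products.
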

\begin{proof}
To prove the first part, we first show that each gate in the generating set can be written in the above form, then show that the set generated by these operators form a group.
From the definitions provided in section~\ref{s_prelim}, we have that $Z$, $S$ and $M$ gate applied to the $t$\textsuperscript{th} qudit can be written in the form of equation~\eqref{eq_pf} with $f(\mathbf{x}) = x_t, x_t^2, x_t^3$, respectively, and with $E = \mathbb{I}$.
$P_l$ applied to the $t$\textsuperscript{th} qudit  has $f(\mathbf{x}) = 0$ (as does the $SUM$ gate) and $E=\mathbb{I} + ((l-1)\delta_{i,t}\delta_{j,t})$ with inverse $E^{-1}=\mathbb{I} + ((\frac{1}{l}-1)\delta_{i,t}\delta_{j,t})$.
Finally, the $SUM$ gate whose control and target are the $c$\textsuperscript{th} and $t$\textsuperscript{th} qudits, respectively, has $E=\mathbb{I} + (\delta_{i,t}\delta_{j,c})$,  which has inverse $E^{-1}=\mathbb{I} - (\delta_{i,t}\delta_{j,c})$. 
By definition, the set generated by $\mathcal{G}$ is closed under multiplication and as each generator is a unitary matrix, the associative property holds. Finally, $\mathbb{I}, Z^\dagger, S^\dagger, M^\dagger \in \langle \mathcal{G} \rangle$ so the identity and inverse group axioms are satisfied.

To prove the second part, that $f(\mathbf{x})$ is cubic, we note that the only gates which contribute to $f(\mathbf{x})$ are $Z$, $S$ and $M$, which add a term equal to the state of the acted-upon qudit raised to the first, second and third power, respectively.
Because the $Z$, $S$ and $M$ gates are diagonal, the state of any qudit at any point in the circuit can only change due to the $P_l$ and $SUM$ gates, which together map the state of each qudit to linear functions of the input states with coefficients in $\mathbb{Z}_d$.
The linear functions can, at most, be raised to the $3$\textsuperscript{rd} power (due to the $M$ gate), before contributing a term to $f(\mathbf{x})$. Therefore, the order of $f(\mathbf{x})$ is at most cubic.
\end{proof}

The linear and quadratic terms of any $f(\mathbf{x})$ can be implemented using just Clifford operations, which cost considerably less than the cubic terms that require $M$ gates.
Therefore, we assume that $f(\mathbf{x})$ is a homogeneous cubic polynomial. It follows that $f(\mathbf{x})$ can be decomposed in the monomial basis as follows:
\begin{equation}
\label{eq_mon}
f(\mathbf{x}) = \sum_{\alpha,\beta,\gamma=1}^nS_{\alpha,\beta,\gamma}x_\alpha x_\beta x_\gamma,
\end{equation}
where $S \in \mathbb{Z}_d^{(n,n,n)}$. 
Since every choice of $(\alpha,\beta,\gamma)$ for $\alpha \leq \beta \leq \gamma$ corresponds to a different linearly independent monomial, if we enforce that $S$ is symmetric, it follows that the elements of $S$ uniquely determine the function $f(\mathbf{x})$. For this reason, we call it the \emph{signature tensor}.

The phase polynomial $f(\mathbf{x})$ can also be decomposed as a sum over linear forms raised to the third power, as in the following:
\begin{equation}
\label{eq_imp}
f(\mathbf{x}) = \sum_{j=1}^{m}\lambda_j\left(\sum_{i=1}^{n}A_{i,j}x_i\right)^3,
\end{equation}
where $\mathbf{\lambda} \in (\mathbb{Z}_d \setminus \{0\})^m$ and $A \in \mathbb{Z}_d^{(n,m)}$ such that for each column in $A$, there is at least one non-zero element.
It is straightforward to calculate the signature tensor from the elements of $A$ and $\lambda$ using the following relation,
\begin{equation}
\label{eq_st}
S_{\alpha,\beta,\gamma} = \sum_{j=1}^m \lambda_j A_{\alpha,j} A_{\beta,j} A_{\gamma,j}.
\end{equation}

\begin{definition}
\noindent \textbf{Implementation. } Let $U_f$ be a unitary with signature tensor $S \in \mathbb{Z}_d^{(n,n,n)}$.
Let $A \in \mathbb{Z}_d^{(n,m)}$ and $\mathbf{\lambda} \in (\mathbb{Z}_d \setminus \{0\})^m$.
We say that the tuple $(A, \lambda)$ is an \emph{implementation} of $S$ if it satisfies equation~\eqref{eq_st}.
\end{definition}

We refer to the tuple $(A, \lambda)$ as an implementation because it reveals information sufficient to construct a quantum circuit that implements $U_f$ with known $M$ count, as stated in the following lemma.
\begin{lemma}
\label{lem_imp}
Let $U_f$ be a unitary with an implementation $(A , \lambda)$ that has $m$ columns.
It follows that a quantum circuit can be efficiently generated which implements $U_f$ using no more than $m$ $M$ gates.
\end{lemma}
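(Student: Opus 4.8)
The plan is to read a circuit off the pair $(A,\lambda)$ directly, processing the $m$ columns of $A$ one at a time and spending a single $M^k$ gate on each. Writing $\ell_j(\mathbf{x}) = \sum_{i=1}^n A_{i,j} x_i$ for the linear form encoded by the $j$\nth column, equation~\eqref{eq_imp} tells us that $f(\mathbf{x}) = \sum_{j=1}^m \lambda_j \ell_j(\mathbf{x})^3$. I would first note that it suffices to realise the diagonal phase operator $\sum_{\mathbf{x}} \omega^{f(\mathbf{x})}\ket{\mathbf{x}}\bra{\mathbf{x}}$, since by Lemma~\ref{thm_pf2cub} the accompanying basis permutation $\ket{E\mathbf{x}}\bra{\mathbf{x}}$ is built entirely from $SUM$ and $P_l$ gates and so contributes nothing to the $M$ count.

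The central building block I would establish is a Clifford ``scaled addition'' that maps $\ket{c,t}\mapsto\ket{c,t+ac}$ for any nonzero $a \in \mathbb{Z}_d$, realised as $P_a\, SUM\, P_a^{-1}$ on the target qudit and verified by direct substitution. Composing such gates (together with $P_l$) lets me apply any invertible linear map to the register. The construction then proceeds sequentially: before the $j$\nth $M$ gate I apply a Clifford layer $C_j$ so that the cumulative linear map $L_j = C_j \cdots C_1$ places the form $\ell_j(\mathbf{x})$ into a chosen qudit $q_j$; applying $M^{\lambda_j}$ to $q_j$ then imprints the phase $\omega^{\lambda_j \ell_j(\mathbf{x})^3}$. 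Because the $M^k$ gates are diagonal and therefore mutually commuting, commuting all of them to the right shows the total phase on $\ket{\mathbf{x}}$ is $\sum_j \lambda_j \ell_j(\mathbf{x})^3 = f(\mathbf{x})$, while the leftover basis permutation is the single invertible map $L_{m+1}$, which I correct to $E$ with one further Clifford layer. This uses exactly one $M$ gate per column, giving the bound $m$; reusing the same physical qudits across successive time steps accommodates the case $m > n$ that occurs, for instance, in the $CCZ$ implementation of Fig.~\ref{fig_ccz}.

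That each $\ell_j$ can be exposed on a qudit, and that the layers $C_j$ compose correctly into the required final map, rests on the standard fact that $SUM$ (adding one coordinate to another) and $P_l$ (scaling a coordinate) generate the full general linear group over the field $\mathbb{Z}_d$; since every $\ell_j$ is a nonzero form it occurs as a row of some invertible matrix reachable from $L_{j-1}$, and the transition map $L_j L_{j-1}^{-1}$ then factors into $O(n)$ elementary (hence $SUM$/$P_l$) operations by Gaussian elimination. Counting these gives $O(nm)$ Clifford gates alongside the $m$ $M$ gates, all emitted in time polynomial in $n$ and $m$. I expect the main point needing care to be exactly this bookkeeping of the time-varying linear map carried by the register --- in particular verifying that the scaled-addition gate is Clifford for every nonzero $a$, that the residual permutation is corrected to $E$ exactly rather than up to a leftover linear map, and that the sequential reuse of qudits genuinely handles $m > n$ --- rather than any conceptually deep obstacle.
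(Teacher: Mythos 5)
Your proposal is correct, and its engine is the same as the paper's (appendix~\ref{ap_proof}): spend exactly one $M^{\lambda_j}$ gate per column of $A$, after using $P_l$ and $SUM$ gates to place the linear form $\ell_j(\mathbf{x})=\sum_i A_{i,j}x_i$ on a single qudit so that the diagonal $M$ gate imprints the phase $\omega^{\lambda_j \ell_j(\mathbf{x})^3}$; your scaled-addition gate $P_a\,SUM\,P_a^{-1}$ is precisely the conjugation trick the paper uses, except that the paper conjugates by product operators on the \emph{control} qudit rather than the target. Where you genuinely diverge is the uncomputation strategy. The paper wraps every $M$ gate in a compute/uncompute pair $D\,M^{\lambda_j}\,D^\dagger$, so the register returns to $\ket{\mathbf{x}}$ after each column: the iterations become independent, no invertibility bookkeeping is needed, and correctness of the accumulated phase is immediate. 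You instead defer uncomputation, letting the Clifford layers accumulate into a time-varying invertible map $L_j$ and repairing the residual permutation with a single final Clifford layer, which you also use to restore the $\ket{E\mathbf{x}}$ part of $U_f$ --- a point the paper's proof leaves implicit. This buys roughly a factor of two in Clifford gates, but it is also where your one loose step sits: the transition $L_{j-1}\to L_j$ costs $O(n)$ elementary operations only if $L_j$ is chosen to differ from $L_{j-1}$ in a single row, which is always possible by expanding $\ell_j = \sum_i c_i r_i$ in the rows $r_i$ of $L_{j-1}$ and overwriting any row with $c_i \neq 0$; appealing to generic Gaussian elimination on $L_j L_{j-1}^{-1}$, as you do, only gives $O(n^2)$ operations per column --- still polynomial, so the lemma's efficiency claim survives, but not your stated $O(nm)$ Clifford total. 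Finally, your insistence that the designated qudit hold $\ell_j$ \emph{exactly}, not a nonzero multiple of it, is the right level of care, since scalar factors do not survive cubing ($a^3\neq 1$ in general); indeed the paper's own subcircuit $D$ ends with a $P_{1/A_{t,j}}$ that rescales the form to $\ell_j(\mathbf{x})/A_{t,j}$ before the $M$ gate is applied, so as literally written it should either drop that gate or adjust the exponent to $\lambda_j A_{t,j}^3$ --- your version avoids this issue. Both routes deliver the required bound of $m$ $M$ gates in polynomial time.
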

As proof of lemma~\ref{lem_imp}, we provide in appendix~\ref{ap_proof} an explicit algorithm for efficiently converting an implementation with $m$ columns into a quantum circuit with $m$ $M^k$ gates.

The connection between column count of implementations and $M$ count of quantum circuits is central to the understanding of this work and leads to a restatement of the compiler problem that is more amenable to computational solvers.
\begin{problem}
\label{pr_col}
(Column-minimization). Let $S$ be a signature tensor.
Find an implementation $( A , \lambda )$ that implements $S$ with minimal columns.
\end{problem}

\section{Example: CCZ Gate}
\label{s_CCZ}

Take the $CCZ$ gate as an example, which acts upon the computational basis as follows.
\begin{equation}
CCZ\ket{x_1, x_2, x_3} = \omega^{x_1 x_2 x_3}\ket{x_1, x_2, x_3}.
\end{equation}
In the monomial basis, the phase polynomial can be read off directly as $f(\mathbf{x}) = x_1 x_2 x_3$, which corresponds to a signature tensor with $S_{\sigma(1,2,3)} = \frac{1}{6}$ for all permutations $\sigma$ and $S_{\alpha,\beta,\gamma} = 0$ for all other elements.
However, to generate a quantum circuit for $U_f=CCZ$, we first need to find an implementation for $S$.
By applying knowledge of the qubit version of the $CCZ$ gate to qudits~\cite{Amy_2014, Amy_2016, Campbell_2017,Heyfron_2019}, we arrive at the following implementation\footnote{Note that for notational convenience, we often write an implementation as a single matrix where $\lambda$ is the final row and the rest is the $A$ matrix with a separating horizontal line between them.} that has an $M$ count of 7:
\begin{equation}
\begin{pmatrix} A \\ \hline \lambda \end{pmatrix} = \begin{pmatrix}	1 & 0 & 0 & 1 & 1 & 0 & 1   \\
				0 & 1 & 0 & 1 & 0 & 1 & 1  \\
				0 & 0 & 1 & 0 & 1 & 1 & 1 \\
				\hline \frac{1}{6} & \frac{1}{6} & \frac{1}{6} & -\frac{1}{6} & -\frac{1}{6} & -\frac{1}{6} & \frac{1}{6}\end{pmatrix}, \\
\end{equation}
which corresponds to the phase polynomial,
\begin{equation}
\begin{split}
\label{eq_ccz_imp1}
f(\mathbf{x}) &= \frac{1}{6} x_1^3 + \frac{1}{6} x_2^3 + \frac{1}{6} x_3^3 - \frac{1}{6} (x_1 + x_2)^3 \\&- \frac{1}{6} (x_1 + x_3)^3 - \frac{1}{6} (x_2 + x_3)^3 + \frac{1}{6} (x_1 + x_2 + x_3)^3.
\end{split}
\end{equation}
We remind the reader that all elements of an implementation are in $\mathbb{Z}_d$, so the fraction $\frac{1}{6} = x \in \mathbb{Z}_d$ where $x$ solves $6x = 1 \pmod{d}$.
One can easily verify that the above implementation, $(A,\lambda)$, satisfies equation~\eqref{eq_st} for every element of the signature tensor, $S_{a,b,c}$, confirming that it implements the $CCZ$ gate.
 
Using a computer aided discovery method described in section~\ref{s_opt}, we have found an implementation with $M$ count 4 that works for all choices of $d$. This is a key result of the present work and is provided below.
\begin{equation}
\label{eq_ccz_imp}
\begin{pmatrix}  A \\ \hline \lambda \end{pmatrix} = \begin{pmatrix}	1 & 1 & -1 & -1    \\
				1 & -1 & 1 & -1 \\
				1 & -1 & -1 & 1 \\
			\hline \frac{1}{24} & \frac{1}{24} & \frac{1}{24} & \frac{1}{24} \end{pmatrix}.
\end{equation}
This corresponds to the phase polynomial
\begin{equation}
\begin{split}
\label{eq_ccz_imp2}
f(\mathbf{x}) &=  \frac{1}{24}(x_1 + x_2 + x_3)^3 + \frac{1}{24}(x_1 - x_2 - x_3)^3 \\&+\frac{1}{24} (x_2 - x_1 - x_3)^3 +\frac{1}{24} (x_3 - x_1 - x_2)^3.
\end{split}
\end{equation}
An explicit quantum circuit for the above implementation of  the $CCZ$ gate is provided in figure~\ref{fig_ccz}.

\section{Compilers}
\label{s_opt}

\subsection{Brute-Force}
In order to construct an $M$-optimal implementation for a given phase polynomial, one can perform a brute-force search over all possible implementations, checking in polynomial-time in each case that it corresponds to the correct signature tensor using equation~\eqref{eq_st}.
However, the size of the search space scales as $O(d^{(n+1)m})$, which makes execution times impractical, even for modest sized inputs.
However, by searching in $m$-order where $m$ is the candidate number of $M$ gates, we can optimally compile unitaries on $n=3$ ququints ($d=5$) with $M$ count of up to $4$ (and lower bound unitaries with an implicitly higher optimal $M$ count).
It was through this brute-force method that we were able to discover the implementation of $CCZ$ with $M$ count of 4 presented in equation~\eqref{eq_ccz_imp}.

\subsection{Monomial Substitution}
It is critically important that a general-purpose compiler is efficient. Fortunately, there is a simple method to map a phase polynomial in the monomial basis to an implementation.
There are three kinds of monomial that may appear in a phase polynomial, which are distinguished by the number of variables they take. These are $x_a^3$, $x_a x_b^2$ and $x_a x_b x_c$.
As each monomial is linearly independent, if we can find a prototypical implementation for each kind of monomial, then it follows that we can compile an implementation for a general phase polynomial by substituting instances of the prototypes for each monomial.
Again using~\cite{Heyfron_2019} as inspiration, we provide prototype implementations for the three kinds of monomial below.

\begin{align}
\label{eq_mon1}x_a^3 &\rightarrow x_a^3 \\
\label{eq_mon2} x_a x_b^2 &\rightarrow \frac{1}{6}(x_a + x_b)^3 + \frac{1}{6}(x_a - x_b)^3 
- \frac{1}{3}x_a^3 \\
 \label{eq_mon3} x_a x_b x_c &\rightarrow \frac{1}{24}(x_1 + x_2 + x_3)^3 +  \frac{1}{24} (x_1 - x_2 - x_3)^3 +  \frac{1}{24} (x_2 - x_1 - x_3)^3 +  \frac{1}{24}(x_3 - x_1- x_2)^3,
\end{align}
where we have used the implementation from equation~\eqref{eq_ccz_imp2} for the $x_a x_b x_c$ prototype.
Of course, we can also use the ``legacy'' $M$-count 7 implementation from equation~\eqref{eq_ccz_imp1}, which for certain input unitaries (e.g. ones that contain many gates on the same qudit lines) lead to lower $M$ count implementations due to column merging (see section~\ref{sec_opt}).

We call the above method \emph{monomial substitution}, which executes in time that scales as $O(n^3)$ in the worst case, making it efficient.
However, the output $M$ count should be considered a crude initial guess at the optimal $M$ count and can be significantly improved by the optimization methods described in remainder of this section.

\subsection{$M$-Optimization}
\label{sec_opt}
	One approach to solving problem~\ref{pr_col} is to try and `merge' columns of an existing implementation.
	A pair of columns can be merged if they are duplicates of one another.
	This is because we can collect like terms in the phase polynomial where the coefficients combine linearly.
	An illustrative example is the following. Let $f$ be a phase polynomial with two terms, hence has implementation matrix $A$ with two columns,
	\begin{align}
	f(\mathbf{x}) &= \lambda_1  (A_{1,1}x_1 + A_{2,1}x_2 + \dots + A_{n,1}x_n)^3 \\ &+ \lambda_2  (A_{1,2}x_1 + A_{2,2}x_2 + \dots + A_{n,2}x_n)^3
	\end{align}
	if the two columns of $A$ are duplicates, then we have $A_{i,1}=A_{i,2} \ \forall \ i \in [1,n]$. And so
	\begin{equation}
	f(\mathbf{x}) = (\lambda_1 + \lambda_2)  (A_{1,1}x_1 + A_{2,1}x_2 + \dots + A_{n,1}x_n)^3,
	\end{equation}
	which needs only a single column to represent it, and therefore only a single magic state to implement it.
	
	Of course, it is often the case that an $A$ matrix does not contain any duplicates.
	In this case, we wish to transform $A$ in some way in order to make it contain duplicates, and in such a way that it does not alter the unitary it implements.
	In appendix~\ref{ap_dam}, we describe an $M$-optimizer that systemically searches for and performs such ``duplication transformations'', and subsequently merges the duplicated columns.
	For this reason, we call it the Duplicate And Merge (DAM) algorithm.
	
	The algorithm runs in time that scales as $O(m n^3 d^m)$ so is inefficient.
	However, in practice, it executes much faster than the brute-force compiler and often outputs $M$-optimal implementations, albeit non-deterministically, and is useful for raising the practical limit on input circuits.
	
	\section{Benchmarks}
	In order to determine the speed benefits of using DAM over a brute force search (BFS) and to assess the inevitable drop in $M$-optimality, we performed a benchmark on randomly generated implementations with an $M$-count of 3 for $d=5$ and $n=3$.
	These parameters were chosen as they are the largest parameters that are feasible for BFS where many repetitions are required.
	Each of the $100$ random implementations were first compiled by BFS, then the legacy monomial substitution compiler was run using the signature tensor as input, which was subsequently optimized using DAM $1000$ times.
	The distribution of $M$-counts after optimization with DAM was recorded and an example of a single random instance is shown in figure~\ref{fig_hist}.
	
	\begin{figure}
	\centering \includegraphics[width=0.4\linewidth]{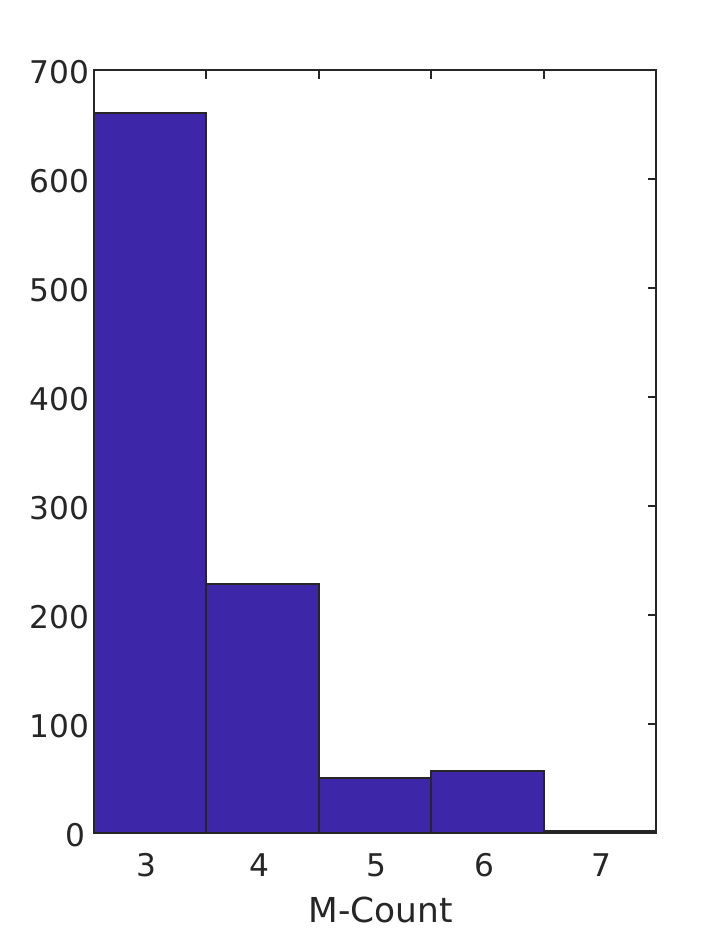}
	\caption{The distribution of $M$-counts for DAM run on a single random implementation with $d=5$, $n=3$ and known optimal $M$-count of 3 performed 1000 times.}
	\label{fig_hist}
	\end{figure}

We have performed a number of benchmarks on larger circuits in order to compare the monomial substitution, legacy monomial substitution and DAM compilers, which are shown in table~\ref{tab_1}. From the data we see that MS is the preferred compiler for low depth circuits such as the $CCZ^{\otimes k}$ family, which is to be expected as it is likely that the optimal $M$-count in this case is $4k$.
In contrast, the DAM compiler consistently outperforms MS and Legacy for random circuits.
Unfortunately, DAM is too inefficient to be practically useful for scalable quantum computers as  we see from the large execution times.
The computational bottleneck for DAM is due to the hardness of solving the multivariate cubic system of equations in equation~\eqref{eq_dodd1}.
Our implementation uses a search over the space of all vectors on $\mathbb{Z}_d$ that are of length $m$ in the worst case (for details see appendix~\ref{ap_dam}).
Therefore, the search consists of $O(d^m)$ iterations and as all other parts of the DAM algorithm executes in polynomial time, this is the sole source of inefficiency.
It follows that if one had access to an efficient heuristic that solves the system of equations in~\eqref{eq_dodd1}, then DAM immediately becomes efficient.
Although, it would probably be the case that as a consequence of it being a heuristic, not all column mergings would be discoverable.
	
	As DAM is a non-deterministic  heuristic, it is important to quantify the stability of its output.
	We used the results to estimate the probability that DAM produces an $M$-optimal implementation of an unknown quantum circuit for the given circuit parameters and found it to be $p_{\text{opt}}:=p(\text{optimal} \ | \ d=5, n=3, m=3) = 0.47\pm0.02$.
	This probability should in no way be interpreted as representative of DAM's performance in general (esp. not for larger circuits) but instead suggests that a best-of-$N$ approach (i.e. where DAM is run $N$ times and the implementation with the minimum $M$-count is returned) would be effective and may return $M$-optimal solutions.
	
	The probability of optimality can be used to estimate the minimum number of DAM repetitions, $N$, that one should perform in order to reach a desired confidence threshold, $p_{\text{conf}}$, using the following:
	\begin{equation}
	N = \ceil[\bigg]{\frac{\ln(1-p_{\text{conf}})}{\ln(1-p_{\text{opt}})}}.
	\end{equation}
	Using our results for $p_{\text{opt}}$ and a confidence threshold of $p_{\text{conf}}=0.95$, we arrive at $N=5$.
	The mean execution time\footnote{Execution times were obtained on a laptop with an Intel Core i7 2.40GHz processor, 12GB of RAM running Microsoft Windows 10 Home edition.} for a single run of DAM was $0.119s \pm 0.006s$ compared to $91s\pm6s$ for BFS.
	Therefore, the best-of-5 DAM compiler remains faster than BFS by two orders of magnitude.

	\begin{table}[h!]
	\caption{\small Benchmark table for the Monomial Substitution (MS), Legacy (Leg.) and Duplicate And Merge (DAM) compilers. 
	For each benchmark, a best-of-10 method was used for DAM. 
	The $M$-counts are shown in the $M_{\text{compiler}}$ columns and the executions times in the $t_{\text{compiler}}$ columns.
	The executions times were obtained on the Iceberg HPC Cluster at the University of Sheffield.
	The ``Random'' circuits are randomly generated  signature tensors where each element is a non-zero value with 50\% probability the subsequent value is selected uniformly.
	All $M$-counts and execution times reported in these rows are mean values taken over 100 random circuits, or as many circuits as could be synthesised within a 24 hour window. The notation CCZ$_{\# m}$ refers to a circuit with $m$ CCZ gates where they all share exactly 1 control qudit in common.}
	\label{tab_1}
		\begin{tabularx}{\textwidth}{|X|c|c|c|c|c|c|c|c|}
	\hline   \textbf{Circuit} & $\mathbf{d}$ & $\mathbf{n}$ & $\mathbf{M_{\text{Leg.}}}$ & $\mathbf{M_{\text{MS}}}$ & $\mathbf{M_{\text{DAM}}}$ & $\mathbf{t_{\text{Leg.}}}$ \textbf{(s)} & $\mathbf{t_{\text{MS}}}$  \textbf{(s)}& $\mathbf{t_{\text{DAM}}}$  \textbf{(s)} \\
	\hline   CCZ &5 & 3&7&4&5&0.10&0.02&0.98 \\
	CCZ$^{\otimes 2}$ &5 & 6 &14&8&10&0.03&0.01&20.87 \\
	CCZ$^{\otimes 3}$ &5 & 9 &21&12&16&0.02&$<$0.01&813.02 \\
	 CCZ &7 & 3 &7&4&7&0.08&0.03&6.99 \\
	CCZ$^{\otimes 2}$ &7 & 6 &14&8&10&0.04&0.01&182.69 \\
	CCZ &11 & 3 &7&4&7&0.09&0.04&38.81 \\
	CCZ$^{\otimes 2}$ &11 & 6 &14&8&12&0.04&0.01&11489.15 \\
	CCZ$_{\# 2}$ &5 & 5&13&8&8&0.09&0.02&34.9 \\
	CCZ$_{\# 3}$ &5 & 7&19&12&12&0.04&0.01&2219.68 \\
	CCZ$_{\# 2}$ &7 & 5&13&8&8&0.08&0.02&214.09 \\
	CCZ$_{\# 3}$ &7 & 7&19&12&12&0.04&0.01&23950.64 \\
	CCZ$_{\# 2}$ &11 & 5&13&8&8&0.08&0.02&7976.43 \\
	Random &5 & 3&8.26&11.38&4.52&$<$0.01&$<$0.01&1.40 \\
	Random &5 & 4&16.93&28.86&7.21&$<$0.01&$<$0.01&825.4959 \\
	Random &7 & 3&8.68&11.59&4.38&$<$0.01&$<$0.01&11.76 \\
	Random &7 & 4&16.88&28.75&7.13&$<$0.01&$<$0.01&10416.00\\
	Random &11 & 3&9.08&12.05&4.38&$<$0.01&$<$0.01&185.46 \\ \hline
		\end{tabularx}
	\end{table}

\section{Conclusions and Acknowledgements}

In this work we have generalised the phase polynomial type optimizers to qudit based quantum computers and have used it to demonstrate cost savings only possible in the qudit picture.
This motivates serious discussion into fundamental questions regarding the nature of  first generation fault-tolerant architectures, namely whether they use qubits or qudits. 

We acknowledge support by the Engineering and Physical Sciences Research Council (EPSRC) through grant EP/M024261/1.  We thank Mark Howard for discussions throughout the project.

\begin{appendices}

\section{Proof of Lemma~\ref{lem_imp}}
\label{ap_proof}
Let $U_f\in\langle \mathcal{G} \rangle$ be a unitary and $(A, \lambda)$ be an implementation for $f$ with $m$ columns.
We can efficiently generate a circuit, $C$, on $\mathcal{G}$ that implements $U_f$ from $(A, \lambda)$ using $m$ $M$ gates with the following algorithm:
\begin{enumerate}
\item Initialize an empty circuit, $C$.
\item For each $j \in [1,m]$:
	\begin{enumerate}
		\item Initialize an empty circuit, $D$.
		\item Let $H := \{i \ \mid A_{i,j} \neq 0\}$.
		\item Arbitrarily choose a $t \in H$.
		\item \label{step_lin1} Append $P_{A_{t,j}}$ on qudit line $t$ to $D$.
		\item For each $c \in H \setminus \{t\}$:
		\begin{enumerate}
			\item Append $P_{A_{c,j}}$ on qudit line $c$  to $D$.
			\item Append $SUM_{c,t}$ to $D$.
			\item Append $P_{\frac{1}{A_{c,j}}}$ on qudit line $c$ to $D$.
		\end{enumerate}
		\item \label{step_lin2}  Append $P_{\frac{1}{A_{t,j}}}$ on qudit line $t$ to $D$.
		\item Append $D$ to $C$.
		\item \label{step_m} Append $M^{\lambda_j}$ on qudit line $t$ to $C$.
		\item \label{step_uncomput} Append $D^\dagger$ to $C$.
	\end{enumerate}
\end{enumerate}
First, observe steps~\ref{step_lin1} to~\ref{step_lin2}, which creates a subcircuit $D$ using only $P_l$ and $SUM$ gates that maps the state of the \numth{t} qudit to a linear function of the $n$ input qudits $x_1, x_2, \dots, x_n$ that has coefficients given by the \numth{j} column of $A$.
After $D$ is appended to the output circuit, $C$, step~\ref{step_m} applies an $M^k$ gate with $k=\lambda_j$, which adds a term to the phase polynomial $f(\mathbf{x})$ equal to the aforementioned linear function cubed multiplied by $\lambda_j$, as required.
Finally, in step~\ref{step_uncomput}, the linear function is uncomputed by $D^\dagger$. The whole process is repeated for each of the $m$ columns of $A$.
Each iteration requires only one $M^k$ gate so the total number of $M$ gates required is $m$. The algorithm executes in $O(mn)$ steps and so is efficient.
We end this appendix with the disclamation that the above algorithm is not intended to be optimal with respect to the number of Clifford gates (i.e. $P_l$ and $SUM$) used.

\section{The Duplicate And Merge Optimizer}
\label{ap_dam}

The following is a direct generalisation of the TODD compiler from reference~\cite{Heyfron_2019} to qudits.
The key difference is that the the null space step from TODD is replaced with a multivariate cubic system, for which a common root must be found.
We refer the reader to Section 3.4 and Algorithm 1 of~\cite{Heyfron_2019} for an overview of the TODD compiler, which may aid in understanding DAM.
	
	\begin{definition}
		\label{def_duptra}
		\textbf{Duplication Transformation. } Let $A\in \mathbb{Z}_d^{(n,m)}$ be an implementation and $\mathbf{y} \in \mathbb{Z}_d^m$ be a vector. We define the \emph{duplication transformation} as follows:
		\begin{equation}
		A \rightarrow A + (\mathbf{c}_{b} - \mathbf{c}_{a})\mathbf{y}^T,
		\end{equation}
		where $\mathbf{c}_j$ is the $j$\textsuperscript{th} column of $A$.
	\end{definition}
	We can use this transformation to `create' duplicates as the following lemma shows.
	\begin{lemma}
		\label{lem_duptra}
		Let $A^\prime = A + (\mathbf{c}_{b} - \mathbf{c}_{a})\mathbf{y}^T$ and assume that  $y_a - y_b = 1$.
		It follows that $\mathbf{c}^\prime_a=\mathbf{c}^\prime_b$.
	\end{lemma}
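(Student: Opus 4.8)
The plan is to prove the claim by a direct computation of the two relevant columns of $A'$, exploiting the rank-one structure of the transformation. First I would observe that the correction term $(\mathbf{c}_b - \mathbf{c}_a)\mathbf{y}^T$ is an outer product, so its $j$\textsuperscript{th} column is simply $(\mathbf{c}_b - \mathbf{c}_a)$ scaled by the $j$\textsuperscript{th} entry of $\mathbf{y}$. Reading off columns from the definition $A' = A + (\mathbf{c}_b - \mathbf{c}_a)\mathbf{y}^T$ therefore gives the closed form
\begin{equation}
\mathbf{c}'_j = \mathbf{c}_j + (\mathbf{c}_b - \mathbf{c}_a)\, y_j
\end{equation}
for every $j \in [1,m]$.

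Next I would specialize this formula to $j = a$ and $j = b$, obtaining $\mathbf{c}'_a = \mathbf{c}_a + (\mathbf{c}_b - \mathbf{c}_a)y_a$ and $\mathbf{c}'_b = \mathbf{c}_b + (\mathbf{c}_b - \mathbf{c}_a)y_b$. Subtracting these and collecting terms yields
\begin{equation}
\mathbf{c}'_a - \mathbf{c}'_b = (\mathbf{c}_a - \mathbf{c}_b) + (\mathbf{c}_b - \mathbf{c}_a)(y_a - y_b) = (\mathbf{c}_a - \mathbf{c}_b)\bigl(1 - (y_a - y_b)\bigr).
\end{equation}
Invoking the hypothesis $y_a - y_b = 1$ makes the scalar factor vanish, so $\mathbf{c}'_a - \mathbf{c}'_b = \mathbf{0}$, which is exactly the desired conclusion $\mathbf{c}'_a = \mathbf{c}'_b$.

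There is no real obstacle here; the result is a one-line algebraic identity once the outer-product structure is unpacked. The only point requiring a word of care is that all entries live in $\mathbb{Z}_d$, but since the derivation is a formal linear identity in the columns and the coefficient $y_a - y_b$, it holds over the ring $\mathbb{Z}_d$ without modification. I would keep the exposition to these two short displays so that the role of the normalization condition $y_a - y_b = 1$ — namely that it is precisely the value forcing the two columns to coincide — is made transparent.
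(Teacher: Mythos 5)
Your proof is correct and follows essentially the same route as the paper's: both read off the columns of the rank-one update $(\mathbf{c}_b - \mathbf{c}_a)\mathbf{y}^T$, specialize to columns $a$ and $b$, and invoke $y_a - y_b = 1$; the paper substitutes $y_a = y_b + 1$ entrywise and matches expressions, while you subtract and factor, which is a purely cosmetic difference. No gaps to report.
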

	\begin{proof}
		From the definition of $A^\prime$,
		\begin{equation}
		A^\prime_{i,j} = A_{i,j} + z_iy_j,
		\end{equation}
		now substitute in $z_i \equiv A_{i,b} - A_{i,a}$,
		\begin{equation}
		\label{eq_l2_1}
		A^\prime_{i,j} = A_{i,j} + (A_{i,b} - A_{i,a})y_j.
		\end{equation}
		Apply equation \eqref{eq_l2_1} to both $(\mathbf{c}_a)_i \equiv A^\prime_{i,a}$ and $(\mathbf{c}_b)_i \equiv A^\prime_{i,b}$,
		\begin{equation}
		A^\prime_{i,b} = A_{i,b} + (A_{i,b} - A_{i,a})y_b,
		\end{equation}
		\begin{equation}
		\label{eq_l2_2}
		A^\prime_{i,a} = A_{i,a} + (A_{i,b} - A_{i,a})y_a.
		\end{equation}
		Substitute $y_a = y_b + 1$ into equation \eqref{eq_l2_2} and rearrange,
		\begin{align}
		A^\prime_{i,a} &= A_{i,a} + (A_{i,b} - A_{i,a})(y_b + 1) \\
		&= A_{i,a} + (A_{i,b} - A_{i,a})y_b + A_{i,b} - A_{i,a} \\
		&= A_{i,b} + (A_{i,b} - A_{i,a})y_b = A^\prime_{i,b}.
		\end{align}
		$\forall \ i \in [1,n]$ so $\mathbf{c}^\prime_{a} = \mathbf{c}^\prime_{b}$.
	\end{proof}
	
	The duplication transformation must not alter $f$.
	This leads to the condition $S^\prime_{\alpha,\beta,\gamma} = S_{\alpha,\beta,\gamma} \ \forall \ \alpha,\beta,\gamma \in [1,n]$, where $S^\prime$ and $S$ are the signature tensors for $A^\prime$ and $A$, respectively.
	So,
	\begin{align}
	S^\prime_{\alpha,\beta,\gamma} &= \sum_{j=1}^m \lambda_j A^\prime_{\alpha,j} A^\prime_{\beta,j} A^\prime_{\gamma,j}, \\
	&=  \sum_{j=1}^m \lambda_j (A_{\beta,j} + z_\beta y_j) (A_{\beta,j} + z_\beta y_j) (A_{\gamma,j} + z_\gamma y_j), \\
	&= \sum_{j=1}^m \lambda_j A_{\alpha,j} A_{\beta,j} A_{\gamma,j} + \Delta_{\alpha,\beta,\gamma} = S_{\alpha,\beta,\gamma} + \Delta_{\alpha,\beta,\gamma},
	\end{align}
	where we define,	
	\begin{equation}
	\begin{split}
	\label{eq_delta_1}
	\Delta_{\alpha,\beta,\gamma} :=& \sum_{j=1}^m \lambda_j (A_{\alpha,j} A_{\beta,j} z_\gamma y_j + A_{\beta,j} A_{\gamma,j} z_\alpha y_j + A_{\gamma,j} A_{\alpha,j} z_\beta y_j \\
	 &+ A_{\alpha,j}z_\beta z_\gamma y_j^2 + A_{\beta,j}z_\gamma z_\alpha y_j^2 + A_{\gamma,j}z_\alpha z_\beta y_j^2
	 + z_\alpha z_\beta z_\gamma y_j^3),
	\end{split}
	\end{equation}
	and $\mathbf{z} := \mathbf{c}_b - \mathbf{c}_a$.
	In order for $S^\prime = S$, we require that
	\begin{equation}
	\label{eq_sig_cons}
	\Delta_{\alpha,\beta,\gamma}=0 \quad \forall \quad \alpha,\beta,\gamma \in [1,n].
	\end{equation}
	This leads to a system of $\sum_{i=1}^3\binom{n}{i}$ cubic polynomials on $r$ variables ($y_1, y_2, \dots, y_m$) that can be rewritten as follows:
	\begin{align}
	\label{eq_dodd1}
	\sum_{j=1}^{m}\left(l_{\alpha,\beta,\gamma,j}y_j + q_{\alpha,\beta,\gamma,j}y_j^2 + c_{\alpha,\beta,\gamma,j}y_j^3\right) &= 0, \\
	\label{eq_dodd2} y_a - y_b - 1 = 0
	\end{align}
	where the linear, quadratic and cubic coefficients for variable $t$ are given by,
	\begin{align}
	l_{\alpha,\beta,\gamma,j} &= \lambda_j (A_{\alpha,j} A_{\beta,j} z_\gamma + A_{\beta,j} A_{\gamma,j} z_\alpha + A_{\gamma,j} A_{\alpha,j} z_\beta) \\
	q_{\alpha,\beta,\gamma,j} &= \lambda_j (A_{\alpha,j} z_\beta z_\gamma + A_{\beta,j} z_\gamma z_\alpha + A_{\gamma,j} z_\alpha z_\beta)\\
	c_{\alpha,\beta,\gamma,j} &= \lambda_j z_\alpha z_\beta z_\gamma,
	\end{align}
	respectively.
	
	Any $\mathbf{y}$ that is a simultaneous solution to equations \eqref{eq_dodd1} and equation \eqref{eq_dodd2} allows us to reduce the number of columns of $A$ using the duplication transformation from definition \eqref{def_duptra}. Unfortunately, the problem of solving a general multivariate cubic system such as this is known to be NP-complete.
	A brute-force solver that searches through every possible $\mathbf{y}$ runs in $O(d^m)$ time. However, we can significantly speed up the search using the following relinearisation technique. First, we introduce new variables, $y_{m+1}, y_{m+2}, \dots, y_{3m}$, such that
	\begin{align}
 	\label{eq_relin2}
 	y_{m+j} &= y_j^2, \\
 	y_{2m+j} &= y_j^3, \\
 	\end{align}
 	for all $j \in [1,m]$.
 	The system of equations from~\eqref{eq_dodd1} becomes:
 	\begin{equation}
 	\label{eq_relin}
 	\sum_{j=1}^m l_{\alpha,\beta,\gamma,j}y_j + \sum_{k=m+1}^{2m} q_{\alpha,\beta,\gamma,k}y_{k} + \sum_{l=2m+1}^{3m} c_{\alpha,\beta,\gamma,l}y_{l} = 0,
 	\end{equation}
 	which is linear in the $\{y_j\}$. Let $D$ be the coefficient matrix defined as follows:
 	For each triple $\{(\alpha, \beta, \gamma) \ \mid \ \alpha \leq \beta \leq \gamma \in [1,n]$, there exists a row in $D$ of the following form:
	\begin{equation}
	\textsc{Row}_{\alpha,\beta,\gamma}(D) = \begin{pmatrix} (l_{\alpha,\beta,\gamma,j}) & (q_{\alpha,\beta,\gamma,j}) & (c_{\alpha,\beta,\gamma,j}) \end{pmatrix}.
	\end{equation}
 	Now we can calculate a complete basis for the solutions to equation~\eqref{eq_relin} by calculating the right null space of $D$, which we denote $N_D$. We can think of the columns of $N_D$ as a basis for the `partial' solutions of the system of equations~\eqref{eq_dodd1}. In order to promote them to `full' solutions, we need to enforce conditions from equations~\eqref{eq_relin2} and~\eqref{eq_dodd2}, which we do using the following algorithm. 
 	\begin{enumerate}
	\item Form $N_D^\prime$ by erasing all but the first $m$ rows of $N_D$.
	\item Form $N_D^{\prime\prime}$ by column-reducing $N_D^\prime$ and subsequently removing every all-zero column.
	\item Let $\mu := \textsc{Cols}(N_D^{\prime\prime})$.
	\item For each $\mathbf{x} \in \mathbb{Z}_d^{\mu}$:
	\begin{enumerate}
	\item Construct  $\mathbf{y}_{\mathbf{x}} = N_D^{\prime\prime}\mathbf{x}$
	\item Construct $\mathbf{y}_{\mathbf{x}}^\prime = \begin{pmatrix} \mathbf{y}_{\mathbf{x}} \\ \mathbf{y}_{\mathbf{x}}^2 \\ \mathbf{y}_{\mathbf{x}}^3\end{pmatrix}$
	\item If $D\mathbf{y}_{\mathbf{x}}^\prime = \mathbf{0}$ and~\eqref{eq_dodd2} holds, then return $\mathbf{y}_{\mathbf{x}}$.
	\end{enumerate}
	\item Return ``No Solution''.
	\end{enumerate}
	
	In effect, the relinearlisation method replaces a search over every $\mathbf{y}\in \mathbb{Z}_d^m$ with a search over every $\mathbf{x}\in\mathbb{Z}_d^\mu$, so it only runs faster if $\mu < m$.
	It is certainly the case that $\mu \leq m$ as $\mu$ is the column rank of $N_D^{\prime}$, which has $m$ rows.
	Whether or not the strict inequality holds depends on the input circuit but in practice, we often find that it does hold and leads to significant speed-up over the naive brute force approach.

\end{appendices}


\begin{thebibliography}{31}%
	\makeatletter
	\providecommand \@ifxundefined [1]{%
		\@ifx{#1\undefined}
	}%
	\providecommand \@ifnum [1]{%
		\ifnum #1\expandafter \@firstoftwo
		\else \expandafter \@secondoftwo
		\fi
	}%
	\providecommand \@ifx [1]{%
		\ifx #1\expandafter \@firstoftwo
		\else \expandafter \@secondoftwo
		\fi
	}%
	\providecommand \natexlab [1]{#1}%
	\providecommand \enquote  [1]{``#1''}%
	\providecommand \bibnamefont  [1]{#1}%
	\providecommand \bibfnamefont [1]{#1}%
	\providecommand \citenamefont [1]{#1}%
	\providecommand \href@noop [0]{\@secondoftwo}%
	\providecommand \href [0]{\begingroup \@sanitize@url \@href}%
	\providecommand \@href[1]{\@@startlink{#1}\@@href}%
	\providecommand \@@href[1]{\endgroup#1\@@endlink}%
	\providecommand \@sanitize@url [0]{\catcode `\\12\catcode `\$12\catcode
		`\&12\catcode `\#12\catcode `\^12\catcode `\_12\catcode `\%12\relax}%
	\providecommand \@@startlink[1]{}%
	\providecommand \@@endlink[0]{}%
	\providecommand \url  [0]{\begingroup\@sanitize@url \@url }%
	\providecommand \@url [1]{\endgroup\@href {#1}{\urlprefix }}%
	\providecommand \urlprefix  [0]{URL }%
	\providecommand \Eprint [0]{\href }%
	\providecommand \doibase [0]{http://dx.doi.org/}%
	\providecommand \selectlanguage [0]{\@gobble}%
	\providecommand \bibinfo  [0]{\@secondoftwo}%
	\providecommand \bibfield  [0]{\@secondoftwo}%
	\providecommand \translation [1]{[#1]}%
	\providecommand \BibitemOpen [0]{}%
	\providecommand \bibitemStop [0]{}%
	\providecommand \bibitemNoStop [0]{.\EOS\space}%
	\providecommand \EOS [0]{\spacefactor3000\relax}%
	\providecommand \BibitemShut  [1]{\csname bibitem#1\endcsname}%
	\let\auto@bib@innerbib\@empty
	\bibitem [{\citenamefont {Brusentsov}\ and\ \citenamefont
		{Ramil~Alvarez}(2011)}]{Brusentsov_2011}%
	\BibitemOpen
	\bibfield  {author} {\bibinfo {author} {\bibfnamefont {N.~P.}\ \bibnamefont
			{Brusentsov}}\ and\ \bibinfo {author} {\bibfnamefont {J.}~\bibnamefont
			{Ramil~Alvarez}},\ }\href {\doibase 10.1007/978-3-642-22816-2_10} {\bibfield
		{journal} {\bibinfo  {journal} {IFIP Advances in Information and
				Communication Technology}\ ,\ \bibinfo {pages} {74}} (\bibinfo {year}
		{2011})}\BibitemShut {NoStop}%
	\bibitem [{\citenamefont {Duclos-Cianci}\ and\ \citenamefont
		{Poulin}(2013)}]{Duclos-Cianci_2013}%
	\BibitemOpen
	\bibfield  {author} {\bibinfo {author} {\bibfnamefont {G.}~\bibnamefont
			{Duclos-Cianci}}\ and\ \bibinfo {author} {\bibfnamefont {D.}~\bibnamefont
			{Poulin}},\ }\href {\doibase 10.1103/PhysRevA.87.062338} {\bibfield
		{journal} {\bibinfo  {journal} {Phys. Rev. A}\ }\textbf {\bibinfo {volume}
			{87}},\ \bibinfo {pages} {062338} (\bibinfo {year} {2013})}\BibitemShut
	{NoStop}%
	\bibitem [{\citenamefont {Anwar}\ \emph {et~al.}(2014)\citenamefont {Anwar},
		\citenamefont {Brown}, \citenamefont {Campbell},\ and\ \citenamefont
		{Browne}}]{Anwar_2014}%
	\BibitemOpen
	\bibfield  {author} {\bibinfo {author} {\bibfnamefont {H.}~\bibnamefont
			{Anwar}}, \bibinfo {author} {\bibfnamefont {B.~J.}\ \bibnamefont {Brown}},
		\bibinfo {author} {\bibfnamefont {E.~T.}\ \bibnamefont {Campbell}}, \ and\
		\bibinfo {author} {\bibfnamefont {D.~E.}\ \bibnamefont {Browne}},\ }\href
	{http://stacks.iop.org/1367-2630/16/i=6/a=063038} {\bibfield  {journal}
		{\bibinfo  {journal} {New Journal of Physics}\ }\textbf {\bibinfo {volume}
			{16}},\ \bibinfo {pages} {063038} (\bibinfo {year} {2014})}\BibitemShut
	{NoStop}%
	\bibitem [{\citenamefont {Hutter}\ \emph {et~al.}(2015)\citenamefont {Hutter},
		\citenamefont {Loss},\ and\ \citenamefont {Wootton}}]{Hutter_2015}%
	\BibitemOpen
	\bibfield  {author} {\bibinfo {author} {\bibfnamefont {A.}~\bibnamefont
			{Hutter}}, \bibinfo {author} {\bibfnamefont {D.}~\bibnamefont {Loss}}, \ and\
		\bibinfo {author} {\bibfnamefont {J.~R.}\ \bibnamefont {Wootton}},\ }\href
	{http://stacks.iop.org/1367-2630/17/i=3/a=035017} {\bibfield  {journal}
		{\bibinfo  {journal} {New Journal of Physics}\ }\textbf {\bibinfo {volume}
			{17}},\ \bibinfo {pages} {035017} (\bibinfo {year} {2015})}\BibitemShut
	{NoStop}%
	\bibitem [{\citenamefont {Watson}\ \emph
		{et~al.}(2015{\natexlab{a}})\citenamefont {Watson}, \citenamefont {Campbell},
		\citenamefont {Anwar},\ and\ \citenamefont {Browne}}]{Watson_2015_a}%
	\BibitemOpen
	\bibfield  {author} {\bibinfo {author} {\bibfnamefont {F.~H.~E.}\
			\bibnamefont {Watson}}, \bibinfo {author} {\bibfnamefont {E.~T.}\
			\bibnamefont {Campbell}}, \bibinfo {author} {\bibfnamefont {H.}~\bibnamefont
			{Anwar}}, \ and\ \bibinfo {author} {\bibfnamefont {D.~E.}\ \bibnamefont
			{Browne}},\ }\href {\doibase 10.1103/PhysRevA.92.022312} {\bibfield
		{journal} {\bibinfo  {journal} {Phys. Rev. A}\ }\textbf {\bibinfo {volume}
			{92}},\ \bibinfo {pages} {022312} (\bibinfo {year}
		{2015}{\natexlab{a}})}\BibitemShut {NoStop}%
	\bibitem [{\citenamefont {Watson}\ \emph
		{et~al.}(2015{\natexlab{b}})\citenamefont {Watson}, \citenamefont {Anwar},\
		and\ \citenamefont {Browne}}]{Watson_2015_b}%
	\BibitemOpen
	\bibfield  {author} {\bibinfo {author} {\bibfnamefont {F.~H.~E.}\
			\bibnamefont {Watson}}, \bibinfo {author} {\bibfnamefont {H.}~\bibnamefont
			{Anwar}}, \ and\ \bibinfo {author} {\bibfnamefont {D.~E.}\ \bibnamefont
			{Browne}},\ }\href {\doibase 10.1103/PhysRevA.92.032309} {\bibfield
		{journal} {\bibinfo  {journal} {Phys. Rev. A}\ }\textbf {\bibinfo {volume}
			{92}},\ \bibinfo {pages} {032309} (\bibinfo {year}
		{2015}{\natexlab{b}})}\BibitemShut {NoStop}%
	\bibitem [{\citenamefont {Anwar}\ \emph {et~al.}(2012)\citenamefont {Anwar},
		\citenamefont {Campbell},\ and\ \citenamefont {Browne}}]{anwar2012qutrit}%
	\BibitemOpen
	\bibfield  {author} {\bibinfo {author} {\bibfnamefont {H.}~\bibnamefont
			{Anwar}}, \bibinfo {author} {\bibfnamefont {E.~T.}\ \bibnamefont {Campbell}},
		\ and\ \bibinfo {author} {\bibfnamefont {D.~E.}\ \bibnamefont {Browne}},\
	}\href@noop {} {\bibfield  {journal} {\bibinfo  {journal} {New Journal of
				Physics}\ }\textbf {\bibinfo {volume} {14}},\ \bibinfo {pages} {063006}
		(\bibinfo {year} {2012})}\BibitemShut {NoStop}%
	\bibitem [{\citenamefont {Campbell}\ \emph {et~al.}(2012)\citenamefont
		{Campbell}, \citenamefont {Anwar},\ and\ \citenamefont
		{Browne}}]{Campbell_2012}%
	\BibitemOpen
	\bibfield  {author} {\bibinfo {author} {\bibfnamefont {E.~T.}\ \bibnamefont
			{Campbell}}, \bibinfo {author} {\bibfnamefont {H.}~\bibnamefont {Anwar}}, \
		and\ \bibinfo {author} {\bibfnamefont {D.~E.}\ \bibnamefont {Browne}},\
	}\href {\doibase 10.1103/PhysRevX.2.041021} {\bibfield  {journal} {\bibinfo
			{journal} {Phys. Rev. X}\ }\textbf {\bibinfo {volume} {2}},\ \bibinfo {pages}
		{041021} (\bibinfo {year} {2012})}\BibitemShut {NoStop}%
	\bibitem [{\citenamefont {Campbell}(2014)}]{Campbell_2014}%
	\BibitemOpen
	\bibfield  {author} {\bibinfo {author} {\bibfnamefont {E.~T.}\ \bibnamefont
			{Campbell}},\ }\href {\doibase 10.1103/PhysRevLett.113.230501} {\bibfield
		{journal} {\bibinfo  {journal} {Phys. Rev. Lett.}\ }\textbf {\bibinfo
			{volume} {113}},\ \bibinfo {pages} {230501} (\bibinfo {year}
		{2014})}\BibitemShut {NoStop}%
	\bibitem [{\citenamefont {Haah}\ \emph {et~al.}(2017)\citenamefont {Haah},
		\citenamefont {Hastings}, \citenamefont {Poulin},\ and\ \citenamefont
		{Wecker}}]{haah2017magic}%
	\BibitemOpen
	\bibfield  {author} {\bibinfo {author} {\bibfnamefont {J.}~\bibnamefont
			{Haah}}, \bibinfo {author} {\bibfnamefont {M.~B.}\ \bibnamefont {Hastings}},
		\bibinfo {author} {\bibfnamefont {D.}~\bibnamefont {Poulin}}, \ and\ \bibinfo
		{author} {\bibfnamefont {D.}~\bibnamefont {Wecker}},\ }\href@noop {}
	{\bibfield  {journal} {\bibinfo  {journal} {Quantum}\ }\textbf {\bibinfo
			{volume} {1}},\ \bibinfo {pages} {31} (\bibinfo {year} {2017})}\BibitemShut
	{NoStop}%
	\bibitem [{\citenamefont {Krishna}\ and\ \citenamefont
		{Tillich}(2018)}]{krishna2018towards}%
	\BibitemOpen
	\bibfield  {author} {\bibinfo {author} {\bibfnamefont {A.}~\bibnamefont
			{Krishna}}\ and\ \bibinfo {author} {\bibfnamefont {J.-P.}\ \bibnamefont
			{Tillich}},\ }\href@noop {} {\bibfield  {journal} {\bibinfo  {journal} {arXiv
				preprint arXiv:1811.08461}\ } (\bibinfo {year} {2018})}\BibitemShut {NoStop}%
	\bibitem [{\citenamefont {Khan}\ and\ \citenamefont
		{Perkowski}(2005)}]{khan2005synthesis}%
	\BibitemOpen
	\bibfield  {author} {\bibinfo {author} {\bibfnamefont {F.~S.}\ \bibnamefont
			{Khan}}\ and\ \bibinfo {author} {\bibfnamefont {M.}~\bibnamefont
			{Perkowski}},\ }\href@noop {} {\bibfield  {journal} {\bibinfo  {journal}
			{arXiv preprint quant-ph/0511041}\ } (\bibinfo {year} {2005})}\BibitemShut
	{NoStop}%
	\bibitem [{\citenamefont {Bocharov}\ \emph {et~al.}(2017)\citenamefont
		{Bocharov}, \citenamefont {Roetteler},\ and\ \citenamefont
		{Svore}}]{bocharov2017factoring}%
	\BibitemOpen
	\bibfield  {author} {\bibinfo {author} {\bibfnamefont {A.}~\bibnamefont
			{Bocharov}}, \bibinfo {author} {\bibfnamefont {M.}~\bibnamefont {Roetteler}},
		\ and\ \bibinfo {author} {\bibfnamefont {K.~M.}\ \bibnamefont {Svore}},\
	}\href@noop {} {\bibfield  {journal} {\bibinfo  {journal} {Physical Review
				A}\ }\textbf {\bibinfo {volume} {96}},\ \bibinfo {pages} {012306} (\bibinfo
		{year} {2017})}\BibitemShut {NoStop}%
	\bibitem [{\citenamefont {Matsumoto}\ and\ \citenamefont
		{Amano}(2008)}]{Matsumoto_2008}%
	\BibitemOpen
	\bibfield  {author} {\bibinfo {author} {\bibfnamefont {K.}~\bibnamefont
			{Matsumoto}}\ and\ \bibinfo {author} {\bibfnamefont {K.}~\bibnamefont
			{Amano}},\ }\href {https://arxiv.org/abs/0806.3834} {\bibfield  {journal}
		{\bibinfo  {journal} {Pre-print arXiv:0806.3834}\ } (\bibinfo {year}
		{2008})}\BibitemShut {NoStop}%
	\bibitem [{\citenamefont {Giles}\ and\ \citenamefont
		{Selinger}(2013)}]{Giles_2013}%
	\BibitemOpen
	\bibfield  {author} {\bibinfo {author} {\bibfnamefont {B.}~\bibnamefont
			{Giles}}\ and\ \bibinfo {author} {\bibfnamefont {P.}~\bibnamefont
			{Selinger}},\ }\href {https://arxiv.org/abs/1312.6584} {\bibfield  {journal}
		{\bibinfo  {journal} {Pre-print arXiv:1312.6584}\ } (\bibinfo {year}
		{2013})}\BibitemShut {NoStop}%
	\bibitem [{\citenamefont {Kliuchnikov}\ \emph {et~al.}(2013)\citenamefont
		{Kliuchnikov}, \citenamefont {Maslov},\ and\ \citenamefont
		{Mosca}}]{Kliuchnikov_2013}%
	\BibitemOpen
	\bibfield  {author} {\bibinfo {author} {\bibfnamefont {V.}~\bibnamefont
			{Kliuchnikov}}, \bibinfo {author} {\bibfnamefont {D.}~\bibnamefont {Maslov}},
		\ and\ \bibinfo {author} {\bibfnamefont {M.}~\bibnamefont {Mosca}},\ }\href
	{http://dl.acm.org/citation.cfm?id=2535649.2535653} {\bibfield  {journal}
		{\bibinfo  {journal} {Quantum Info. Comput.}\ }\textbf {\bibinfo {volume}
			{13}},\ \bibinfo {pages} {607} (\bibinfo {year} {2013})}\BibitemShut
	{NoStop}%
	\bibitem [{\citenamefont {Gosset}\ \emph {et~al.}(2014)\citenamefont {Gosset},
		\citenamefont {Kliuchnikov}, \citenamefont {Mosca},\ and\ \citenamefont
		{Russo}}]{Gosset_2014}%
	\BibitemOpen
	\bibfield  {author} {\bibinfo {author} {\bibfnamefont {D.}~\bibnamefont
			{Gosset}}, \bibinfo {author} {\bibfnamefont {V.}~\bibnamefont {Kliuchnikov}},
		\bibinfo {author} {\bibfnamefont {M.}~\bibnamefont {Mosca}}, \ and\ \bibinfo
		{author} {\bibfnamefont {V.}~\bibnamefont {Russo}},\ }\href
	{https://www.microsoft.com/en-us/research/publication/an-algorithm-for-the-t-count/}
	{\bibfield  {journal} {\bibinfo  {journal} {Quantum Info. Comput.}\ }\textbf
		{\bibinfo {volume} {14}},\ \bibinfo {pages} {1261} (\bibinfo {year}
		{2014})}\BibitemShut {NoStop}%
	\bibitem [{\citenamefont {Amy}\ \emph {et~al.}(2014)\citenamefont {Amy},
		\citenamefont {Maslov},\ and\ \citenamefont {Mosca}}]{Amy_2014}%
	\BibitemOpen
	\bibfield  {author} {\bibinfo {author} {\bibfnamefont {M.}~\bibnamefont
			{Amy}}, \bibinfo {author} {\bibfnamefont {D.}~\bibnamefont {Maslov}}, \ and\
		\bibinfo {author} {\bibfnamefont {M.}~\bibnamefont {Mosca}},\ }\href
	{\doibase 10.1109/TCAD.2014.2341953} {\bibfield  {journal} {\bibinfo
			{journal} {IEEE Transactions on Computer-Aided Design of Integrated Circuits
				and Systems}\ }\textbf {\bibinfo {volume} {33}},\ \bibinfo {pages} {1476}
		(\bibinfo {year} {2014})}\BibitemShut {NoStop}%
	\bibitem [{\citenamefont {Amy}\ and\ \citenamefont {Mosca}(2016)}]{Amy_2016}%
	\BibitemOpen
	\bibfield  {author} {\bibinfo {author} {\bibfnamefont {M.}~\bibnamefont
			{Amy}}\ and\ \bibinfo {author} {\bibfnamefont {M.}~\bibnamefont {Mosca}},\
	}\href {https://arxiv.org/abs/1601.07363} {\bibfield  {journal} {\bibinfo
			{journal} {Pre-print arXiv:1601.07363}\ } (\bibinfo {year}
		{2016})}\BibitemShut {NoStop}%
	\bibitem [{\citenamefont {Campbell}\ and\ \citenamefont
		{Howard}(2017)}]{Campbell_2017}%
	\BibitemOpen
	\bibfield  {author} {\bibinfo {author} {\bibfnamefont {E.~T.}\ \bibnamefont
			{Campbell}}\ and\ \bibinfo {author} {\bibfnamefont {M.}~\bibnamefont
			{Howard}},\ }\href {\doibase 10.1103/PhysRevA.95.022316} {\bibfield
		{journal} {\bibinfo  {journal} {Phys. Rev. A}\ }\textbf {\bibinfo {volume}
			{95}},\ \bibinfo {pages} {022316} (\bibinfo {year} {2017})}\BibitemShut
	{NoStop}%
	\bibitem [{\citenamefont {Heyfron}\ and\ \citenamefont
		{Campbell}(2019)}]{Heyfron_2019}%
	\BibitemOpen
	\bibfield  {author} {\bibinfo {author} {\bibfnamefont {L.~E.}\ \bibnamefont
			{Heyfron}}\ and\ \bibinfo {author} {\bibfnamefont {E.~T.}\ \bibnamefont
			{Campbell}},\ }\href {http://stacks.iop.org/2058-9565/4/i=1/a=015004}
	{\bibfield  {journal} {\bibinfo  {journal} {Quantum Science and Technology}\
		}\textbf {\bibinfo {volume} {4}},\ \bibinfo {pages} {015004} (\bibinfo {year}
		{2019})}\BibitemShut {NoStop}%
	\bibitem [{\citenamefont {Nam}\ \emph {et~al.}(2017)\citenamefont {Nam},
		\citenamefont {J.~Ross}, \citenamefont {Su}, \citenamefont {Childs},\ and\
		\citenamefont {Maslov}}]{Nam_2017}%
	\BibitemOpen
	\bibfield  {author} {\bibinfo {author} {\bibfnamefont {Y.}~\bibnamefont
			{Nam}}, \bibinfo {author} {\bibfnamefont {N.}~\bibnamefont {J.~Ross}},
		\bibinfo {author} {\bibfnamefont {Y.}~\bibnamefont {Su}}, \bibinfo {author}
		{\bibfnamefont {A.}~\bibnamefont {Childs}}, \ and\ \bibinfo {author}
		{\bibfnamefont {D.}~\bibnamefont {Maslov}},\ }\href {\doibase
		10.1038/s41534-018-0072-4} {\bibfield  {journal} {\bibinfo  {journal} {npj
				Quantum Information}\ }\textbf {\bibinfo {volume} {4}} (\bibinfo {year}
		{2017}),\ 10.1038/s41534-018-0072-4}\BibitemShut {NoStop}%
	\bibitem [{\citenamefont {Glaudell}\ \emph {et~al.}(2018)\citenamefont
		{Glaudell}, \citenamefont {Ross},\ and\ \citenamefont
		{Taylor}}]{Glaudell_2018}%
	\BibitemOpen
	\bibfield  {author} {\bibinfo {author} {\bibfnamefont {A.~N.}\ \bibnamefont
			{Glaudell}}, \bibinfo {author} {\bibfnamefont {N.~J.}\ \bibnamefont {Ross}},
		\ and\ \bibinfo {author} {\bibfnamefont {J.~M.}\ \bibnamefont {Taylor}},\
	}\href {https://arxiv.org/abs/1803.05047} {\bibfield  {journal} {\bibinfo
			{journal} {Pre-print arXiv:1803.05047}\ } (\bibinfo {year}
		{2018})}\BibitemShut {NoStop}%
	\bibitem [{\citenamefont {Jones}(2013)}]{Jones_2013}%
	\BibitemOpen
	\bibfield  {author} {\bibinfo {author} {\bibfnamefont {C.}~\bibnamefont
			{Jones}},\ }\href@noop {} {\bibfield  {journal} {\bibinfo  {journal}
			{Physical Review A}\ }\textbf {\bibinfo {volume} {87}},\ \bibinfo {pages}
		{022328} (\bibinfo {year} {2013})}\BibitemShut {NoStop}%
	\bibitem [{\citenamefont {Howard}\ and\ \citenamefont
		{Vala}(2012)}]{Howard_2012}%
	\BibitemOpen
	\bibfield  {author} {\bibinfo {author} {\bibfnamefont {M.}~\bibnamefont
			{Howard}}\ and\ \bibinfo {author} {\bibfnamefont {J.}~\bibnamefont {Vala}},\
	}\href@noop {} {\bibfield  {journal} {\bibinfo  {journal} {Physical Review
				A}\ }\textbf {\bibinfo {volume} {86}},\ \bibinfo {pages} {022316} (\bibinfo
		{year} {2012})}\BibitemShut {NoStop}%
	\bibitem [{\citenamefont {Cui}\ \emph {et~al.}(2017)\citenamefont {Cui},
		\citenamefont {Gottesman},\ and\ \citenamefont {Krishna}}]{Cui_2017}%
	\BibitemOpen
	\bibfield  {author} {\bibinfo {author} {\bibfnamefont {S.~X.}\ \bibnamefont
			{Cui}}, \bibinfo {author} {\bibfnamefont {D.}~\bibnamefont {Gottesman}}, \
		and\ \bibinfo {author} {\bibfnamefont {A.}~\bibnamefont {Krishna}},\ }\href
	{\doibase 10.1103/PhysRevA.95.012329} {\bibfield  {journal} {\bibinfo
			{journal} {Phys. Rev. A}\ }\textbf {\bibinfo {volume} {95}},\ \bibinfo
		{pages} {012329} (\bibinfo {year} {2017})}\BibitemShut {NoStop}%
	\bibitem [{\citenamefont {Fowler}\ \emph {et~al.}(2012)\citenamefont {Fowler},
		\citenamefont {Whiteside},\ and\ \citenamefont
		{Hollenberg}}]{fowler2012towards}%
	\BibitemOpen
	\bibfield  {author} {\bibinfo {author} {\bibfnamefont {A.~G.}\ \bibnamefont
			{Fowler}}, \bibinfo {author} {\bibfnamefont {A.~C.}\ \bibnamefont
			{Whiteside}}, \ and\ \bibinfo {author} {\bibfnamefont {L.~C.}\ \bibnamefont
			{Hollenberg}},\ }\href@noop {} {\bibfield  {journal} {\bibinfo  {journal}
			{Physical review letters}\ }\textbf {\bibinfo {volume} {108}},\ \bibinfo
		{pages} {180501} (\bibinfo {year} {2012})}\BibitemShut {NoStop}%
	\bibitem [{\citenamefont {O'Gorman}\ and\ \citenamefont
		{Campbell}(2017)}]{o2017quantum}%
	\BibitemOpen
	\bibfield  {author} {\bibinfo {author} {\bibfnamefont {J.}~\bibnamefont
			{O'Gorman}}\ and\ \bibinfo {author} {\bibfnamefont {E.~T.}\ \bibnamefont
			{Campbell}},\ }\href@noop {} {\bibfield  {journal} {\bibinfo  {journal}
			{Physical Review A}\ }\textbf {\bibinfo {volume} {95}},\ \bibinfo {pages}
		{032338} (\bibinfo {year} {2017})}\BibitemShut {NoStop}%
	\bibitem [{\citenamefont {Babbush}\ \emph {et~al.}(2018)\citenamefont
		{Babbush}, \citenamefont {Gidney}, \citenamefont {Berry}, \citenamefont
		{Wiebe}, \citenamefont {McClean}, \citenamefont {Paler}, \citenamefont
		{Fowler},\ and\ \citenamefont {Neven}}]{babbush2018encoding}%
	\BibitemOpen
	\bibfield  {author} {\bibinfo {author} {\bibfnamefont {R.}~\bibnamefont
			{Babbush}}, \bibinfo {author} {\bibfnamefont {C.}~\bibnamefont {Gidney}},
		\bibinfo {author} {\bibfnamefont {D.~W.}\ \bibnamefont {Berry}}, \bibinfo
		{author} {\bibfnamefont {N.}~\bibnamefont {Wiebe}}, \bibinfo {author}
		{\bibfnamefont {J.}~\bibnamefont {McClean}}, \bibinfo {author} {\bibfnamefont
			{A.}~\bibnamefont {Paler}}, \bibinfo {author} {\bibfnamefont
			{A.}~\bibnamefont {Fowler}}, \ and\ \bibinfo {author} {\bibfnamefont
			{H.}~\bibnamefont {Neven}},\ }\href@noop {} {\bibfield  {journal} {\bibinfo
			{journal} {arXiv preprint arXiv:1805.03662}\ } (\bibinfo {year}
		{2018})}\BibitemShut {NoStop}%
	\bibitem [{Note1()}]{Note1}%
	\BibitemOpen
	\bibinfo {note} {Note that for notational convenience, we often write an
		implementation as a single matrix where $\lambda $ is the final row and the
		rest is the $A$ matrix with a separating horizontal line between
		them.}\BibitemShut {Stop}%
	\bibitem [{Note2()}]{Note2}%
	\BibitemOpen
	\bibinfo {note} {Execution times were obtained on a laptop with an Intel Core
		i7 2.40GHz processor, 12GB of RAM running Microsoft Windows 10 Home
		edition.}\BibitemShut {Stop}%
\end{thebibliography}

%

\end{document}